\newcommand{\abs}[1]{\lvert#1\rvert}
\newcommand{\innerp}[1]{\langle {#1} \rangle}
\newcommand{\R}{{\mathbb R}}
\newcommand{\Z}{{\mathbb Z}}
\begin{document}
\renewcommand{\algorithmicrequire}{\textbf{Inputs:}}
\renewcommand{\algorithmicensure}{\textbf{Return:}}
\newtheoremstyle{mythm}%
  {}%
  {}%
  {\itshape}%
  {\parindent}%
  {\bfseries}%
  {:}%
  {.5em}%
  {\thmname{#1}\thmnumber{ #2}\thmnote{ #3}}%
\newtheoremstyle{mythm2}%
  {}%
  {}%
  {\itshape}%
  {\parindent}%
  {\bfseries}%
  {:}%
  {.5em}%
  {\thmname{#1}\thmnumber{ #2}\thmnote{ #3}}%
\theoremstyle{mythm}
\newtheorem{theorem}{Theorem}
\theoremstyle{mythm2}
\newtheorem{lemma}{Lemma}
\theoremstyle{remark}
\newtheorem{remark}{Remark}
\newcommand{\ud}{\,\mathrm{d}} 
\newcommand{\argmin}{\mathop{\rm argmin}\limits}
\newcommand{\argmax}{\mathop{\rm argmax}\limits}
\newcommand{\sign}{{\rm sign}}
\newcommand{\diag}{{\rm diag}}
\newcommand{\supp}{{\rm supp}}

\title{ One-Bit Compressed Sensing by Greedy Algorithms\thanks{The work is supported   by NSFC grant 11171336, 11331012, 11321061. W.~Liu, D.~Gong and Z.~Xu are with Inst. Comp. Math., Academy of Mathematics and Systems Science,
Chinese Academy of Sciences, Beijing, China. Email: liuwenhui11@mails.ucas.ac.cn, gongda@lsec.cc.ac.cn, xuzq@lsec.cc.ac.cn  } }
\author{ Wenhui Liu,\,\, Da Gong,\,\, Zhiqiang Xu}

\date{\today}
\maketitle

\begin{abstract}
Sign truncated matching pursuit (STrMP) algorithm is presented in this paper. STrMP is a new greedy algorithm for the recovery of
sparse signals from the sign measurement, which combines the principle of consistent reconstruction with  orthogonal matching pursuit (OMP).
The main part of STrMP is as concise as OMP and hence STrMP is simple to implement.
In contrast to previous greedy algorithms for one-bit compressed sensing,  STrMP only need to solve a convex and unconstraint subproblem  at each iteration. Numerical experiments show that STrMP is fast and accurate for one-bit compressed sensing  compared with other algorithms.
\end{abstract}

\section{Introduction}

Compressed sensing, or compressive sensing provides a new method of data sampling and reconstruction, which allows to recover sparse signals from much fewer measurements \cite{CaRoTaSSRIIM2006, DoCS2006}. Suppose that we have an unknown sparse signal $\hat{x} \in \mathbb{R}^n$ with $\|\hat{x}\|_0 \leq s$ and $s \ll n$, where $\|\cdot\|_0$ denotes the number of nonzero components. We observe the signal as
                                         $$b = A\hat{x},$$
where $A \in \mathbb{R}^{m \times n}$  is called measurement matrix, $b \in \mathbb{R}^m$ is the vector of measurements. Compressed sensing shows that only $m = O(s\log(n/s))$ measurements are sufficient for exact reconstruction of $\hat{x}$ under many settings for the measurement matrix $A$ \cite{CaTaDLP2005, DoCS2006}.
\subsection{One-Bit Compressed Sensing}
In compressed sensing, it is supposed that the measurements have infinite bit precision. However, in practice what we get is quantized measurements.
In other words, the entries in the measurement vector $b$ must be mapped to a discrete set of values ${\mathcal A}$. There are much work
about the recovery of the general signal from the quantized measurements \cite{Wangxu}. In this paper, we  focus on the case where ${\mathcal A}=\{-1,1\}$ with the mapping being done by the sign function. So we need to recover a $s$-sparse signal from $y:=\sign(b)\in \{-1,1\}^m$. This problem is called one-bit compressed sensing, which  was first introduced by Boufounos-Baraniuk \cite{BoBa1BCS2008}.
In one-bit compressed sensing, we observe original signal as:
                                      $$y = \sign(A\hat{x}),$$
where $y \in \mathbb{R}^m$ with each element is sign of the corresponding element of $A\hat{x}$. That means we lost all magnitude information of $A\hat{x}$. Following \cite{JaLaBoBaRBCSVBSESV2013},  $x^\sharp\in \R^n$ is called  {\em a solution  for one-bit compressed sensing corresponding to $A$ and $\hat x$ } if it  satisfies
 \begin{itemize}
\item[(i)] consistence, i.e. $\sign(Ax^\sharp) =\sign(A\hat{x}) $,
\item [(ii)]sparsity, i.e. $\|x^\sharp\|_0\leq \|{\hat x}\|_0$.
\end{itemize}
A simple observation is that $ \sign(A\hat{x})=\sign(A c\hat{x})$ where $c>0$ is a scale. Thus the best one-bit compressed sensing can do is to recover $\hat x$ up to a positive scale.  Therefore, we usually expect to recover original signal on the unit Euclidean sphere in practice.

\subsection{Previous Work}
 A straightforward way to obtain a solution for one-bit compressed sensing is  to solve the following program:
										\begin{align}
										\label{opt:L0L2}
										\min~& \|x\|_0 \nonumber \\
										\text{s. t.} ~& y=\sign(Ax)~\text{and}~\|x\|_2=1.
										\end{align}
Since (\ref{opt:L0L2}) is computational intractable, similar with compressed sensing, one can replace  the $\ell_0$ norm by the more tractable $\ell_1$ norm  and obtain that ( see \cite{BoBa1BCS2008, PlVeOBCSLP2013, LaWeYiBaTBVFASRBCM2011})
   										\begin{align}
										\label{opt:L1L2}
										\min~& \|x\|_1 \nonumber \\
										\text{s. t.}~& y=\sign(Ax)~\text{and}~\|x\|_2=1.
										\end{align}

Many  algorithms have been proposed to solve (\ref{opt:L1L2}).
Particularly,  in \cite{LaWeYiBaTBVFASRBCM2011}, Laska~et.~al. use the augmented Lagrangian optimization framework to design RSS algorithm  with employing a restricted-step subroutine to solve a non-convex subproblem. Binary iterative hard thresholding (BIHT)  and adaptive outlier pursuit (AOP) are introduced in \cite{JaLaBoBaRBCSVBSESV2013} and \cite{YaYaOsRBCSUAOP2012}, respectively.
 BIHT is the modification of iterative hard thresholding which is to solve compressed sensing problem (see \cite{BlDaTHTCS2009}). AOP is a robust algorithm built on BIHT, and it is exactly BIHT when measurements are noise free. The numerical experiments in \cite{YaYaOsRBCSUAOP2012} show that AOP  performs better than the previous existing algorithms in terms of the
 recovery performance. In  \cite{PlVeOBCSLP2013}, Plan and  Vershynin replace the normalization constraint $\|x\|_2=1$ by $\|Ax\|_1=c_0$ and give an analysis of the following convex program
   										\begin{align}
										\label{opt:L1L1}
										\min~& \|x\|_1 \nonumber \\
										\text{s. t.}~& y=\sign(Ax)~\text{and}~\|Ax\|_1=c_0,
										\end{align}
where $c_0$ is a given positive constant.

 Moreover, in compressed sensing, one develops many greedy algorithms to recover the sparse signals, such as OMP, CoSaMP, ROMP, OMMP and subspace pursuit etc (see \cite{TroppOMP,TroppCOSAMP, ROMP, subspace, XuOmmp} ).
 Motivated by these algorithms in compressed sensing,  one also designs greedy algorithms for one-bit compressed sensing.
 Particulary, the matching sign pursuit (MSP) algorithm is presented in \cite{BoGSSRSM2009}.
 However, MSP suffers from time-consuming since it solves a non-convex sub-problem at  each iteration. Naturally, one may be interested in designing more efficiently greedy-type algorithm for one-bit compressed sensing, which is also the start point of this project.

 \subsection{Our Contribution}
 The aim of this paper is to present the sign truncated matching pursuit (STrMP) algorithm, which is a new greedy algorithm to solve one-bit
 compressed sensing. In particular, motivated by  \cite{PlVeOBCSLP2013}, we replace the unit $\ell_2$-norm constraint by $\|Ax\|_1=c_0$ where $c_0$ is any fixed positive constant, and hence we  consider the  following optimization problem:
                                        \begin{equation}\label{opt:L0L1}
										\begin{aligned}
										\min~& \|x\|_0  \\
										\text{s. t.}~& y=\sign(Ax)~\text{and}~\|Ax\|_1=c_0.
										\end{aligned}
                                        \end{equation}
A key step of STrMP algorithm is to use
                                          $$j_0:=\argmax_i \abs{A_i^\top y}$$
to choose the first index $j_0$. We also  prove that $j_0\in \supp({\hat x})$ with high probability provided $m=O(s\log n)$ and $A$ is a Gaussian matrix. If $j_0\in \supp({\hat x})$, we can remove the constraint $\|Ax\|_1=c_0$ and transform (\ref{opt:L0L1}) to  the program in the  form of
                                        \begin{equation*}
                                        \begin{aligned}										
										\min~& \|z\|_0  \\
										\text{\rm s. t.}~& y = \sign(Pz+q),
										\end{aligned}
                                        \end{equation*}
where $P\in \R^{m\times (n-1)}$ and $q\in \R^m$,
which is more convenient for designing greedy algorithms. We will introduce this and the algorithm in detail in section 2.  STrMP algorithm overcomes the bottleneck of MSP that a non-convex problem need to be solved at each iteration. In fact, STrMP just need to solve a convex and unconstrained sub-problem at each iteration. Hence, numerical experiments show that STrMP outperforms
previous existing  algorithms in terms of speed.  Moreover,  the numerical experiments also show that the recovery performance of STrMP is better than that of MSP and is similar with that of BIHT or AOP. The last but not the least, beside the sparsity level,  no parameter need  to be adjusted in STrMP by the user.

\subsection{Terminology and Organization}
In the following of this paper, we denote by $A_i$ the $i$th column of matrix $A$, and $A_{ij}$ the $i$th row and $j$th column component of $A$. $e_i$ denotes  the unit vector with the $i$th element is 1 and other elements are zero. For a vector $y \in \mathbb{R}^m$, $\diag(y) \in \mathbb{R}^{m \times m}$ denotes  the diagonal matrix whose diagonal elements are corresponding elements of $y$. The symbol $[n]$ denotes the index set $\{1, 2, \ldots, n\}$. For a subset $T \subset [n]$, $|T|$ denotes the number of elements in $T$. For a vector $x \in \mathbb{R}^n$ and a subset $T \subset [n]$, we use $x_T \in \mathbb{R}^{|T|}$  to   denote the vector  containing the entries of $x$ indexed by $T$. We use  $x|_T \in \mathbb{R}^n$ to denote  the vector whose entries indexed by $T$ are corresponding entries to $x$ and the entries indexed by $T^c$ are zero. For a matrix $A \in \mathbb{R}^{m \times n}$,  $A_T \in \mathbb{R}^{m \times |T|}$ denotes  a matrix which contains the columns of $A$ indexed by $T$. We define the {\em sign truncated } function as $(\cdot )_-:=\min\{\cdot,0\}$.

The rest of this paper is organized as follows. In section 2, we  derive the STrMP algorithm, and also present the  theorem which shows that one can choose the first index successfully with high probability provided $m=O(s\log n)$ and $A$ is a Gaussian matrix. The STrMP-$l_1$ algorithm, which is an adjustment of the STrMP algorithm is introduced in section 3.
The numerical results, comparing with other algorithms, are illustrated in section 4. We conclude our results in section 5.

\section{STrMP Algorithm}


We derive the algorithm STrMP in this section. The algorithm   is built  on the following theorem, which shows that one can find a  $x^\sharp$
which is a solution for one-bit compressed sensing
 by solving a program  without  the normalized  constraint
$\|Ax\|_1=c_0$.
\begin{theorem}\label{th:j0}
Suppose that $j_0\in \supp(\hat{x})$ and $y^\top A_{j_0}\neq 0$ where $y=\sign(A\hat{x})$.
Suppose that $z^\sharp\in \R^{n-1}$ is a solution to
                                        \begin{equation}\label{opt:L0L1(2)}
                                        \begin{aligned}										
										\min~& \|z\|_0  \\
										\text{\rm s. t.}~& y = \sign(Pz+q)
										\end{aligned}
                                        \end{equation}
where $P = \left(I - \frac{A_{j_0} y^\top}{y^\top A_{j_0}}\right) A_{[n]\setminus\{j_0\}}$ and $q = \frac{c_0}{y^\top A_{j_0}} A_{j_0}$.
Suppose that $x^\sharp$ is defined by
\begin{equation}\label{eq:xjing}
x^\sharp_{[n]\setminus\{j_0\}}: = z^\sharp,\quad x^\sharp_{j_0}:= \frac{c_0-y^\top A_{[n]\setminus \{j_0\}}z^\sharp}{y^\top A_{j_0}}.
\end{equation}
 Then $\|Ax^\sharp\|_1=c_0$ and $x^\sharp\in \R^n$ is  a solution  for one-bit compressed sensing corresponding to $A$ and $\hat x$.
\end{theorem}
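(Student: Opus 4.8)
The plan is to base the whole argument on one algebraic identity. First I would record that, if $z\in\R^{n-1}$ is arbitrary and $x\in\R^n$ is defined by $x_{[n]\setminus\{j_0\}}:=z$ and $x_{j_0}:=\frac{c_0-y^\top A_{[n]\setminus\{j_0\}}z}{y^\top A_{j_0}}$, then
\[
Ax \;=\; A_{j_0}x_{j_0}+A_{[n]\setminus\{j_0\}}z \;=\; \Big(I-\tfrac{A_{j_0}y^\top}{y^\top A_{j_0}}\Big)A_{[n]\setminus\{j_0\}}z+\tfrac{c_0}{y^\top A_{j_0}}A_{j_0}\;=\;Pz+q .
\]
This is just a term-by-term expansion using the formula for $x_{j_0}$, and it is the only computation in the proof. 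Applying it with $z=z^\sharp$ (so that $x=x^\sharp$ by (\ref{eq:xjing})) gives $Ax^\sharp=Pz^\sharp+q$, which I will use repeatedly.

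Given this identity, consistency is immediate: $z^\sharp$ is feasible for (\ref{opt:L0L1(2)}), so $\sign(Pz^\sharp+q)=y$, hence $\sign(Ax^\sharp)=y=\sign(A\hat x)$. The norm claim is almost as quick: from $y=\sign(Ax^\sharp)$ we get $\|Ax^\sharp\|_1=\sum_i y_i (Ax^\sharp)_i=y^\top A x^\sharp$, and then $y^\top A x^\sharp=y^\top A_{j_0}x^\sharp_{j_0}+y^\top A_{[n]\setminus\{j_0\}}z^\sharp=c_0$, because by (\ref{eq:xjing}) we have $y^\top A_{j_0}x^\sharp_{j_0}=c_0-y^\top A_{[n]\setminus\{j_0\}}z^\sharp$. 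Thus $\|Ax^\sharp\|_1=c_0$.

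The only real content is the sparsity bound $\|x^\sharp\|_0\le\|\hat x\|_0$. Since $\|x^\sharp\|_0\le\|z^\sharp\|_0+1$, it suffices, by optimality of $z^\sharp$, to exhibit a feasible $z$ for (\ref{opt:L0L1(2)}) with $\|z\|_0\le\|\hat x\|_0-1$. I would take the rescaled signal $\tilde x:=\frac{c_0}{\|A\hat x\|_1}\hat x$; here $\|A\hat x\|_1>0$ since $y=\sign(A\hat x)\in\{-1,1\}^m$ forces every entry of $A\hat x$ to be nonzero. Positive homogeneity of $\sign$ gives $\sign(A\tilde x)=\sign(A\hat x)=y$, and $y^\top A\tilde x=\frac{c_0}{\|A\hat x\|_1}\,y^\top A\hat x=\frac{c_0}{\|A\hat x\|_1}\|A\hat x\|_1=c_0$, so $\tilde x_{j_0}$ satisfies exactly the linear relation used to define $x^\sharp_{j_0}$. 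Hence the identity of the first paragraph applies to $\tilde x$ and $\hat z:=\tilde x_{[n]\setminus\{j_0\}}$, giving $P\hat z+q=A\tilde x$ and therefore $\sign(P\hat z+q)=y$, i.e. $\hat z$ is feasible for (\ref{opt:L0L1(2)}). Finally $j_0\in\supp(\hat x)=\supp(\tilde x)$, so deleting the $j_0$-th coordinate drops the number of nonzeros by exactly one: $\|\hat z\|_0=\|\hat x\|_0-1$. Optimality of $z^\sharp$ then yields $\|z^\sharp\|_0\le\|\hat x\|_0-1$ and hence $\|x^\sharp\|_0\le\|\hat x\|_0$.

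The main obstacle will be the bookkeeping of this ``$+1$'': one must use that $j_0$ genuinely lies in $\supp(\hat x)$ (a hypothesis) and that the rescaling changes neither the sign pattern of $A\hat x$ nor the membership $j_0\in\supp(\cdot)$, which is precisely why a \emph{positive} scalar is chosen. Everything else reduces to the linear-algebra identity above, together with the elementary observation that $\|Ax\|_1=y^\top Ax$ whenever $y=\sign(Ax)$.
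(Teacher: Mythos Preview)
Your proposal is correct and follows essentially the same approach as the paper: both hinge on the identity $Ax=Pz+q$ (when $x$ is built from $z$ via (\ref{eq:xjing})), use it to read off consistency and $\|Ax^\sharp\|_1=y^\top Ax^\sharp=c_0$, and obtain sparsity by rescaling $\hat x$ so that $\|A\hat x\|_1=c_0$ and observing that $\hat x_{[n]\setminus\{j_0\}}$ is then feasible for (\ref{opt:L0L1(2)}). Your write-up is slightly more explicit (isolating the identity first and justifying $\|A\hat x\|_1>0$), but there is no substantive difference.
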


Based on Theorem \ref{th:j0}, if  knows an index $j_0\in \supp({\hat x})$ in advance, one can construct a solution for one-bit compressed sensing corresponding to $A$ and $\hat x$ by solving a program  in the form of (\ref{opt:L0L1(2)}).
STrMP algorithm uses
                                                        $$j_0 = \argmax_{i \in [n]} |A_i^\top y|$$
to choose the index $j_0$, and we also prove that $j_0\in {\supp}({\hat x})$ with high probability provided $m=O(s\log(n-s))$ and $A$ is a Gaussian matrix:

\begin{theorem}\label{thm:Firstindex}
Let $\hat x$ be a $s$-sparse vector in $\R^n$.
Let $A \in \mathbb{R}^{m \times n}$ be a random matrix with independent standard normal entries. Assume for some
$\epsilon>0$
$$
m \geq \frac{\pi}{2} s \left( \epsilon + \sqrt{\epsilon^2 + 2\log(n-s)} \right)^2.
$$
Then
                                          $$ \argmax_{i \in [n]}{|(A^\top \sign(A\hat{x}))_i|} \in \supp(\hat{x})$$
holds with probability at least $1-2e \cdot \exp(-c\epsilon^2)$, where $c$ is an absolute constant.
\end{theorem}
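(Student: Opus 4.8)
The plan is to set $T:=\supp(\hat x)$ and $y:=\sign(A\hat x)\in\{-1,1\}^m$ (so that $\|y\|_2=\sqrt m$ always), and to prove that with the claimed probability
$$\max_{i\in T}\abs{A_i^\top y}\;>\;\max_{i\in T^c}\abs{A_i^\top y},$$
which forces every maximizer of $\abs{(A^\top y)_i}$ over $i\in[n]$ to lie in $T$. I would bound the two maxima separately; the structural fact that makes the combination clean is that the columns $\{A_i\}_{i\in T^c}$ are independent of $\{A_i\}_{i\in T}$, hence of $y$. One may assume $\abs T=s$; for $\abs T<s$ simply replace $s$ by $\abs T$ below.

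\emph{Off the support.} Conditioning on $\{A_i\}_{i\in T}$---equivalently, on $y$---each $A_i^\top y$ with $i\in T^c$ is distributed as $\mathcal N(0,m)$, and these are conditionally independent. The Gaussian tail bound $\mathbb{P}(\abs{\mathcal N(0,1)}\ge t)\le 2e^{-t^2/2}$ taken at the level $t:=\sqrt{\epsilon^2+2\log(n-s)}$, together with a union bound over the $n-s$ indices of $T^c$, gives (conditionally, hence unconditionally)
$$\mathbb{P}\Bigl(\max_{i\in T^c}\abs{A_i^\top y}\ge \sqrt m\,\sqrt{\epsilon^2+2\log(n-s)}\Bigr)\;\le\;2(n-s)\,e^{-(\epsilon^2+2\log(n-s))/2}\;=\;2e^{-\epsilon^2/2}.$$

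\emph{On the support.} The identity $\sum_{i\in T}\hat x_i(A_i^\top y)=(A\hat x)^\top\sign(A\hat x)=\|A\hat x\|_1$, H\"older's inequality, and $\|\hat x\|_1\le\sqrt s\,\|\hat x\|_2$ give $\max_{i\in T}\abs{A_i^\top y}\ge \|A\hat x\|_1/(\sqrt s\,\|\hat x\|_2)$. Writing $S:=\|A\hat x\|_1/\|\hat x\|_2=\sum_{j=1}^m\abs{g_j}$, a sum of $m$ i.i.d.\ standard normals, we have $\mathbb{E}S=m\sqrt{2/\pi}$; and since $S$ is a $\sqrt m$-Lipschitz function of the Gaussian block $\{A_i\}_{i\in T}$, Gaussian concentration yields $\mathbb{P}(\mathbb{E}S-S\ge\sqrt{ms}\,\epsilon)\le e^{-s\epsilon^2/2}\le e^{-\epsilon^2/2}$. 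On the complement of this event, the hypothesis---which rearranges to exactly $\mathbb{E}S=m\sqrt{2/\pi}\ge\sqrt{ms}\bigl(\epsilon+\sqrt{\epsilon^2+2\log(n-s)}\bigr)$---forces $S>\sqrt{ms}\,\sqrt{\epsilon^2+2\log(n-s)}$, i.e.\ $\max_{i\in T}\abs{A_i^\top y}\ge S/\sqrt s>\sqrt m\,\sqrt{\epsilon^2+2\log(n-s)}$. A union bound over the two bad events above then shows $\max_{i\in T}\abs{A_i^\top y}>\max_{i\in T^c}\abs{A_i^\top y}$ with probability at least $1-2e^{-\epsilon^2/2}-e^{-\epsilon^2/2}=1-3e^{-\epsilon^2/2}\ge 1-2e\cdot e^{-\epsilon^2/2}$, which is the asserted bound with $c=1/2$.

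\emph{Main obstacle.} The only step that is not routine bookkeeping is the lower bound on $\max_{i\in T}\abs{A_i^\top y}$: for $i\in T$ the statistic $A_i^\top y$ is not Gaussian---indeed not easy to describe---because $y$ depends on $A_i$, so one cannot union-bound over $T$ directly. Routing through $\sum_{i\in T}\hat x_i(A_i^\top y)=\|A\hat x\|_1$ and Cauchy--Schwarz circumvents this and reduces matters to the concentration of $\|A\hat x\|_1$ about its mean; the remaining arithmetic is then arranged so that the constant $\pi/2$, which originates from $\mathbb{E}\abs{\mathcal N(0,1)}=\sqrt{2/\pi}$, falls out exactly.
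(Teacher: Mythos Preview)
Your argument is correct, and the off-support step matches the paper's. The on-support step, however, is genuinely different. The paper singles out the index $\jmath=\argmax_i|\hat x_i|$, computes $\mathbb{E}[A_{i\jmath}y_i]=\sqrt{2/\pi}\,\hat x_\jmath$ by an explicit double integral, and then invokes a Hoeffding-type inequality for sub-Gaussian variables (their Lemma~1) to concentrate $\langle A_\jmath,y\rangle$ around $m\sqrt{2/\pi}\,\hat x_\jmath$; the bound $\hat x_\jmath\ge 1/\sqrt{s}$ then produces the factor $\sqrt{m/s}$. You instead route through the deterministic identity $\sum_{i\in T}\hat x_i(A_i^\top y)=\|A\hat x\|_1$ and H\"older, reducing to Gaussian Lipschitz concentration for $\|A\hat x\|_1/\|\hat x\|_2$; the factor $1/\sqrt{s}$ then comes from $\|\hat x\|_1\le\sqrt{s}\,\|\hat x\|_2$. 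Your route avoids both the integral calculation and the sub-Gaussian lemma, and it yields an explicit constant $c=1/2$ rather than the unspecified absolute constant the paper inherits from Lemma~1. The paper's route, on the other hand, gives slightly more: it shows concentration of $\langle A_\jmath,y\rangle$ itself, which underlies their Remark~4 about recovering the full support when all nonzero entries of $\hat x$ have equal magnitude.
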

We now focus on (\ref{opt:L0L1(2)}). A simple observation is that we can rewrite (\ref{opt:L0L1(2)}) as
                                        \begin{equation}\label{eq:app1}
                                        \begin{aligned}										
										\min~& \|z\|_0  \\
										\text{\rm s. t.}~&  Cz+d\geq 0,
										\end{aligned}
                                        \end{equation}
where $C=\diag(y) P$ and $d=\diag(y) q$.
We use greedy algorithms to find an approximate solution to (\ref{eq:app1}). To state the algorithm, we recall that the {\em sign truncated function } $(\cdot)_-$ which is  defined by $(\cdot)_-:=\min\{0,\cdot\}$.
The algorithm begins with an initial support set  $\Lambda^0=\emptyset$ and an estimate  $z^0=0$.
At the $k$th iteration, the algorithm computes the product of $C^\top$ and the sign truncated vector $(Cz^k+d)_-$
and form a proxy, where the truncated function $(\cdot)_-$ is applied component-wise to $Cz^k+d$. And  use
$$j^k=\argmax_i \abs{C_i^\top (Cz^k+d)_-}$$
   to choose a new index where $C_i^\top$ denotes transposition of $C_i$. Set $\Lambda^{k+1}:=\Lambda^k\cup\{j\}$.
Next we  solve a convex problem to enforce the consistence:
\begin{equation}\label{eq:bb}
{z}^{k+1}:=\argmin_z \|(Cz+d)_-\|_2^2 \qquad {\rm s.~t.}\qquad  \supp(z)\subset \Lambda_{k+1},
\end{equation}
which is essential unconstraint.
We use   two-point step size gradient method \cite{BaBoTPSSGM1988} to solve it and we introduce the algorithm in detail in Appendix B.
We conclude above fact and formulate our algorithm in Algorithm 1.
\begin{algorithm}[htb]
\caption{The Sign Truncated Matching Pursuit (STrMP) Algorithm}
\label{alg:practical}
\begin{algorithmic}[1]
\REQUIRE $A\in \R^{m\times n},~y\in \{-1,1\}^m,~c_0>0, ~s\in \Z^+$, residual tolerance $\varepsilon>0$ \par
\textbf{Initialization:}~ $\Lambda^0 = \emptyset \subset [n]$, $z^0 = 0$, $k = 0$
\STATE  Compute the first index $j_0 = \argmax_{i \in [n]} |(A^\top y)_i|$
\STATE Set
                               $$C = \diag(y) \left(I-\frac{A_{j_0}y^\top}{y^\top A_{j_0}}\right) A_{[n]\setminus \{j_0\}},~d = \frac{c_0}{y^\top A_{j_0}}\diag(y)A_{j_0}.$$

\WHILE {$\|(Cz^k+d)_-\|_2^2 \geq \varepsilon$ and $k\leq s-1$}
    \STATE {\bf Match:} $h^k=C^\top(Cz^{k}+d)_-$
	\STATE {\bf Identify:}  $j^k = \argmax_{i } |h^k_i|$
    \STATE {\bf Update:} $\Lambda^{k+1}=\Lambda^k\cup\{j^k\}$,
    \STATE \qquad\qquad $\,\,{z^{k+1}} = \argmin_{z\in \R^{n-1}} \|(Cz+d)_-\|_2^2~~{\rm s.t.}~z|_{({\Lambda}^{k+1})^c}=0$
    \STATE Increase iteration count: $k=k+1$
\ENDWHILE

\STATE Set  $x^k_{[n]\setminus\{j_0\}} = z^k,\quad x^k_{j_0}= \frac{c_0-y^\top A_{[n]\setminus \{j_0\}}z^k}{y^\top A_{j_0}}$
\ENSURE $\frac{x^k}{\|x^k\|_2}$
\end{algorithmic}
\end{algorithm}

\begin{remark}

According to Theorem \ref{th:j0}, the $x^\sharp$ defined by (\ref{eq:xjing}) is $s$-sparse and consistent. Then Theorem 2 in \cite{JaLaBoBaRBCSVBSESV2013}
shows that if $A$ is a Gaussian matrix, then
$$
\|{x^\sharp}/{\|x^\sharp\|_2}-{\hat{x}}/{\|\hat{x}\|_2}\|_2 =O\left(\frac{s}{m}\log\frac{mn}{s}\right),
$$
holds with high probability.
\end{remark}


\begin{remark}
The code in Algorithm 1 describes a version of the STrMP algorithm. Similar to OMP, there are many adjustments of STrMP. Particularly, in the identify step, one  can select many atoms per iteration instead of
one atom, which is also helpful for improving the performance.
\end{remark}

\begin{remark}
A simple observation is that the output result of STrMP algorithm is independent of $c_0$ provided $c_0\gg \varepsilon$. In fact, for different positive $c_0$,   the solution $z^{k+1}$ in the update step is the same up to a positive scale. And hence, in the identify step, STrMP algorithm chooses the same index $j$ even $c_0$ is different.
\end{remark}
\begin{remark}
Suppose that ${\hat{x}_j}=\pm \frac{1}{\sqrt{s}}$
holds for all $j\in {\rm supp}(\hat x)$. The proof of Theorem \ref{thm:Firstindex} implies that, under the
condition of Theorem \ref{thm:Firstindex},
the $s$ indices corresponding to the largest magnitude entries in the vector
 $\abs{A^\top y}$  are the support of $\hat x$ with high probability.
\end{remark}
\section{STrMP-$l_1$ Algorithm}
In the update step of the STrMP algorithm, we use $\min_z \|(Cz+d)_-\|_2$ to enforce the consistence.
 Inspired by compressed sensing, we can replace $l_2$-norm by $l_1$-norm. Thus one may consider to solve following subproblem in the update step
\begin{equation}\label{eq:STRL1}
{z}^{k+1}:=\argmin_z \|(Cz+d)_-\|_1 \qquad {\rm s.~t.}\qquad  \supp(z)\subset \Lambda_{k+1},
\end{equation}
since $l_1$-norm is more effective to characterize the sparsity than $l_2$-norm. We apply quasi-Newton method to solve the subproblem (\ref{eq:STRL1}). Correspondingly, in the match step, we compute
$$
h^k=C^\top \sign(Cz^k+d)_-,
$$
which is exactly a subgradient of $\|(Cz+d)_-\|_1$ at $z^k$ (see \cite{JaLaBoBaRBCSVBSESV2013}), and in identify step we set
$j^k=\argmax_i\abs{h_i^k}$. Here, to state conveniently, we set $\sign(0)=0$.
  In the following of this paper, we denote this algorithm by STrMP-$l_1$.



\section{Numerical Experiment}


In this section, we make numerical experiments to compare the  performance of STrMP with
that of other existing methods as mentioned before, such as BIHT, MSP and RSS with showing that both STrMP and STrMP-$l_1$ are fast and accurate.

In our experiments, the measurement matrix $A \in \mathbb{R}^{m \times n}$ is generated by Gaussian random matrix. The original signal $\hat{x} \in \mathbb{R}^n$ is $s$-sparse, and its non-zero coefficients are drawn from standard normal distribution, and it is normalized to have unit $l_2$ norm. In all following experiments, we set $n=1000$. And in all following figures, the blue line with circles denotes STrMP, the black line with upward triangle denotes STrMP-$l_1$, the yellow line with cross denotes BIHT, the red line with squares denotes RSS and the green line with hexagram denotes MSP.

\subsection{Accuracy Test}
										\begin{figure}[!htb]
										\begin{center}
										\subfigure[SNR (fixed $s=10$)]{
										\includegraphics[scale=0.32]{./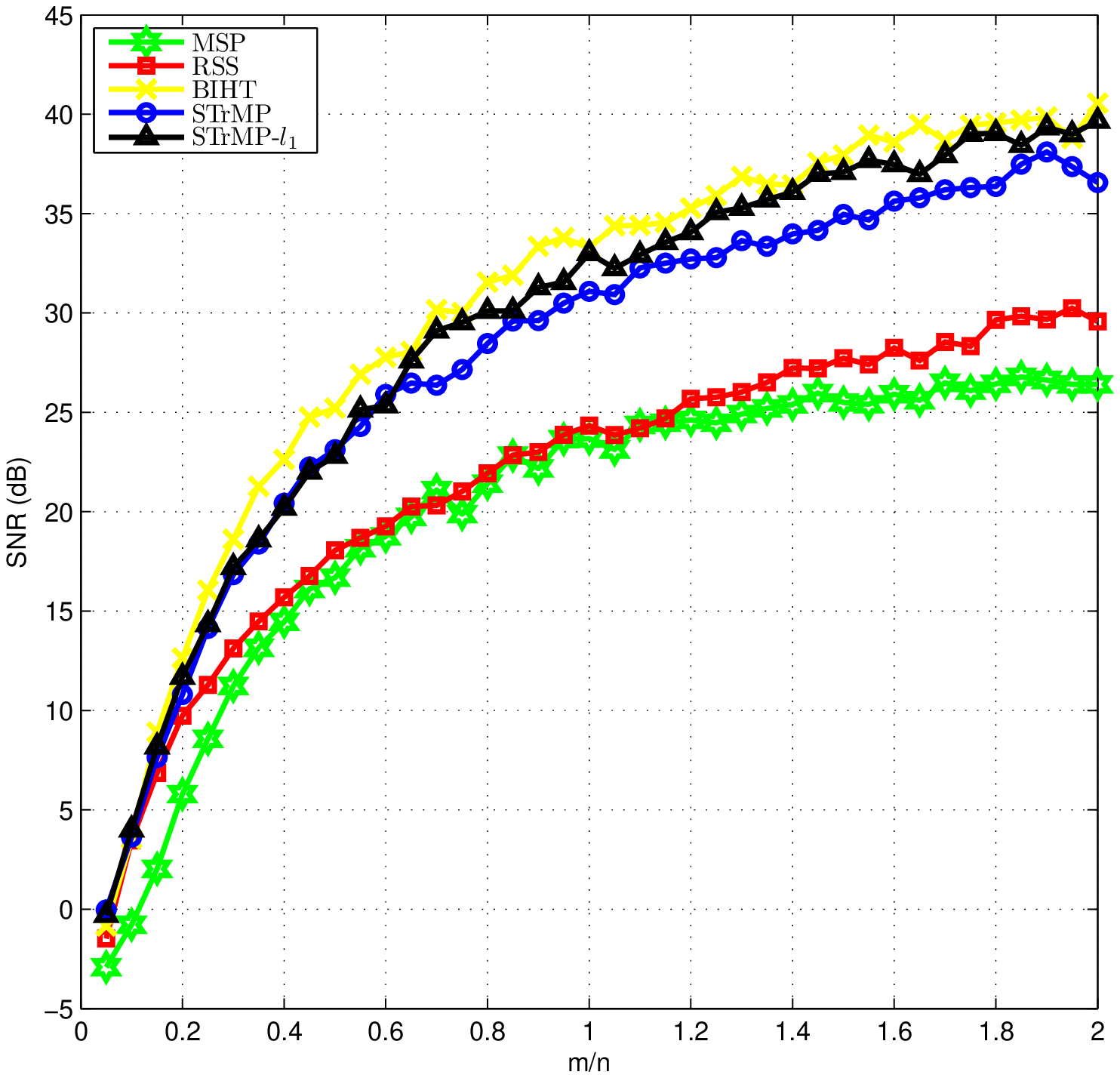}
										\label{fig:snr}} %
										\subfigure[SNR (fixed $m=1000$)]{
										\includegraphics[scale=0.32]{./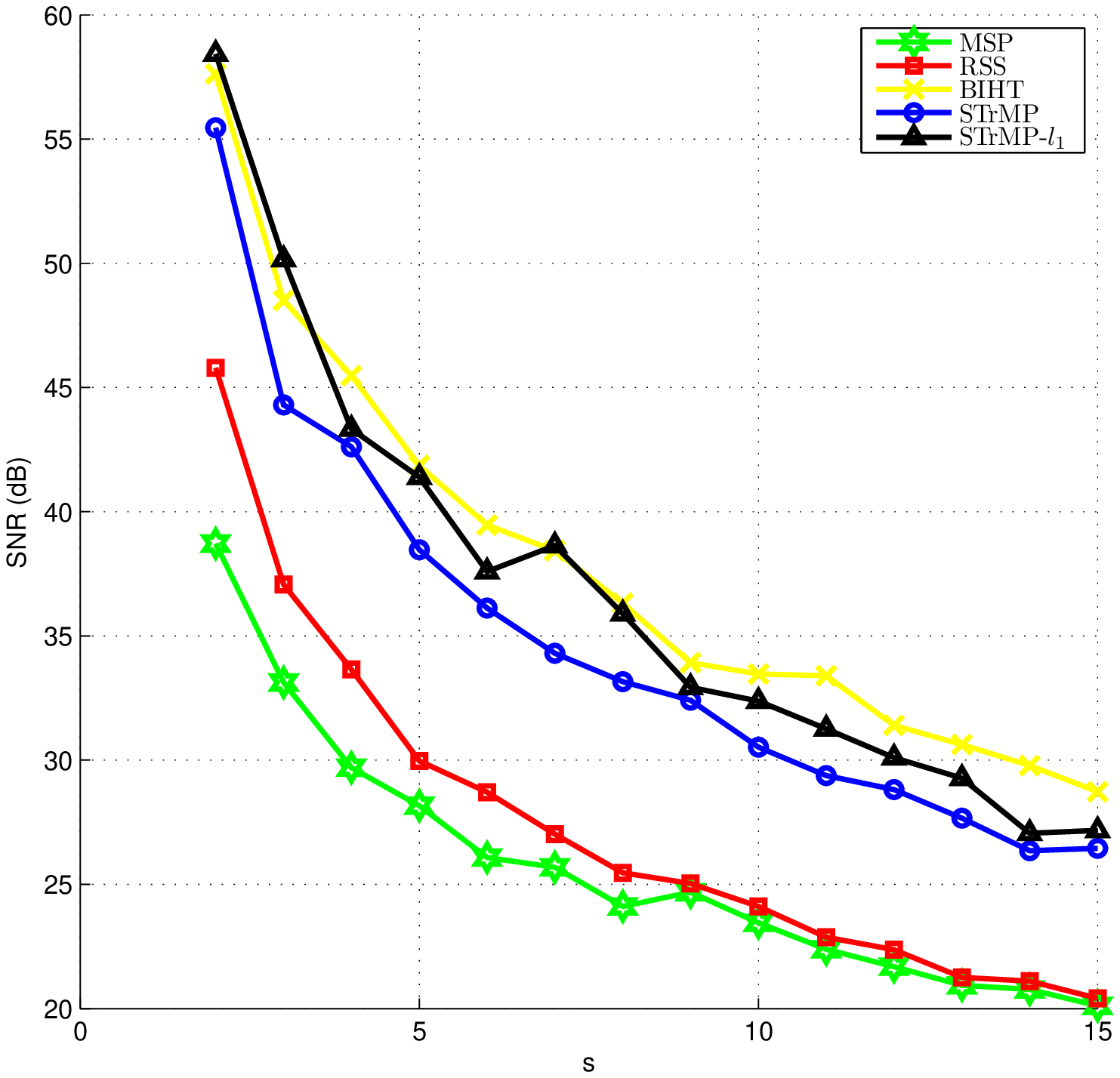}
										\label{fig:snrvss}}\\
										\subfigure[number of missed coefficients]{
										\includegraphics[scale=0.32]{./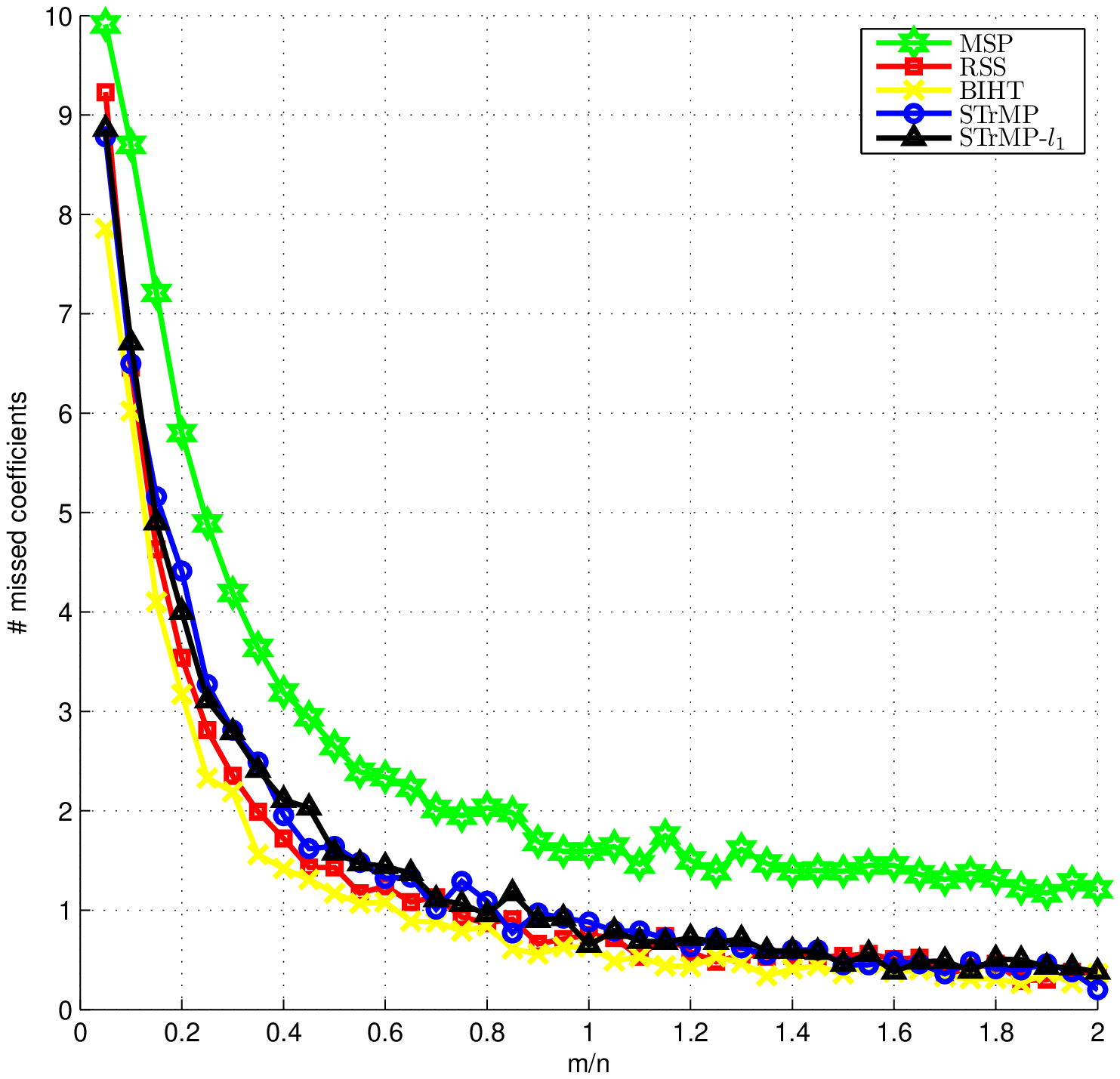}
										\label{fig:missed}}%
										\subfigure[number of misidentified coefficients]{
										\includegraphics[scale=0.32]{./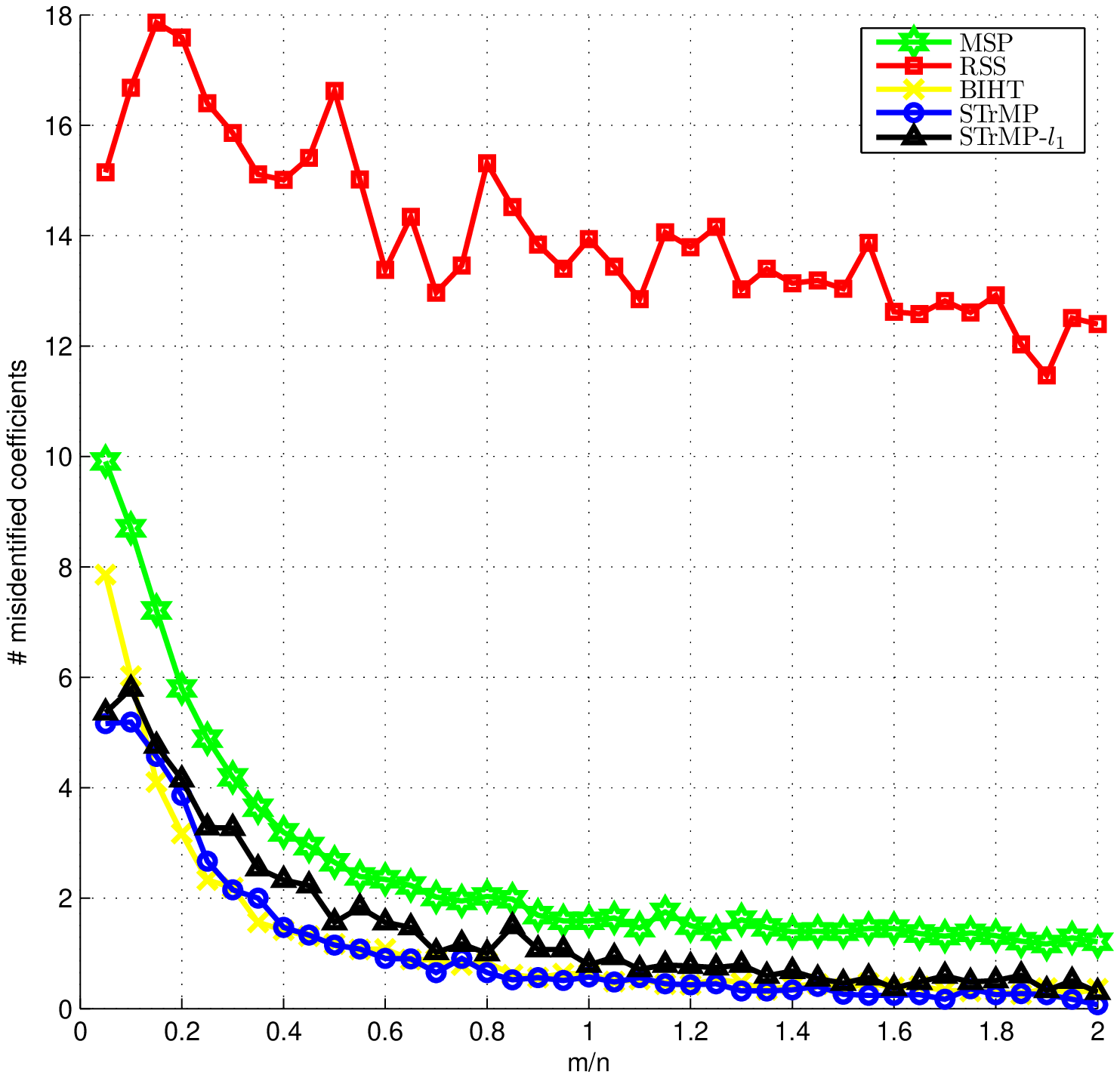}
										\label{fig:misidentified}}
										\end{center}
										\caption{Accuracy experiments: Averaged SNR  for (a) fixed $s=10$ and $n=1000$ while $m$ is changed with  $m/n$ between $0.05$ and $2$, and (b) fixed $n=1000$ and $m=1000$ while $s$ is between 1 and 15. Averaged number of missed coefficients for (c) fixed $s=10$ and $n=1000$ with $m/n$ is between $0.05$ and $2$. Averaged number of misidentified coefficients for (d) $s=10$ and $n=1000$ with $m/n$ is between 0.05 and 2. STrMP, STrMP-$l_1$ and BIHT have similar performance, RSS and MSP perform similarly. And STrMP, STrMP-$l_1$ and BIHT are better than others.}
										\label{fig:accuracy}
										\end{figure}

In this experiment, we test the reconstruction accuracy in three different ways: the average signal-to-noise ratio (SNR), the average number of missed coefficients and the average number of misidentified coefficients, which are defined as follows
\begin{itemize}
\item SNR $:= 10\log_{10}(\frac{\|x^\sharp\|^2}{\|x^\sharp-\hat{x}\|^2})$,
\item Number of missed coefficients $:= |\{i \in [n] : \hat{x}_i \neq 0~\text{and}~x_i^\sharp = 0\}|$,
\item Number of misidentified coefficients $:= |\{i \in [n] : \hat{x}_i = 0~\text{and}~x_i^\sharp \neq 0\}|$.
\end{itemize}
Here SNR is used to measure overall reconstruction performance of the recovery algorithms. Since the original signal $\hat x$ is sparse, it is also important to
identify the  support of original signal. To measure this kind of performance, it is helpful  to calculate the number of  missed coefficients and of misidentified coefficients.

The numerical results are depicted in \autoref{fig:accuracy}.
In this experiment, as mentioned before, we set $n=1000$. In  Figure \ref{fig:snr}, Figure \ref{fig:missed} and Figure \ref{fig:misidentified},
we fix $s=10$ and  change $m/n$ within the range $[0.05, 2]$ with the step  $0.05$. And hence $40$ different $m/n$ are considered.
For each $m/n$, we perform $100$ trials and record the average value. In Figure \ref{fig:snrvss}, we fix $n=1000$ and $m=1000$ while $s$ is between
$1$ and $15$. We also repeat the experiment $100$ times for each $s$ and plot the mean value. The plots in \autoref{fig:accuracy} show that STrMP, STrMP-$l_1$ and BIHT perform similarly. The performance of STrMP-$l_1$ and BIHT is slightly better than STrMP in SNR.  Figure \ref{fig:misidentified} shows that RSS exhibits poorer performance for the number of misidentified coefficients which are also observed and analyzed in \cite{LaWeYiBaTBVFASRBCM2011}.

\subsection{Consistency Test}
										\begin{figure}[!htb]
										\begin{center}
										\includegraphics[scale=0.35]{./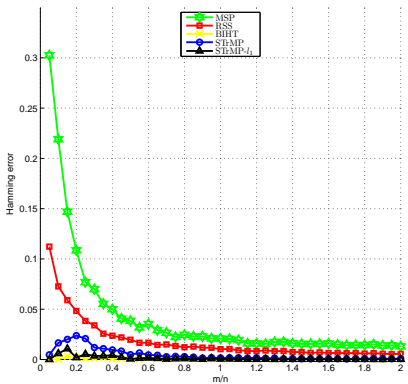}
										\caption{Consistency experiments: Averaged Hamming error between $\sign(Ax^\sharp)$ and $y$ for $s=10$ with $m/n$ is between 0.05 and 2. As the figure depicts that STrMP and STrMP-$l_1$ have the similar performance as BIHT in terms of consistency test, and they are better than others.}
										\label{fig:consistency}
										\end{center}
										\end{figure}
In this subsection, we test whether $x^\sharp$ is consistence. To do that, we  measure the Hamming error which is defined as follows:
\begin{itemize}
\item Hamming error $:= \|\sign(Ax^\sharp)-y\|_0/m$.
\end{itemize}
For each $m/n$, we also perform 100 trials and record the average values.
 The plots in \autoref{fig:consistency} show that  STrMP, STrMP-$l_1$ and BIHT work very well  and they have better performance than that of MSP and RSS.


\subsection{Speed Test}
										\begin{figure}[!htb]
										\begin{center}
										\subfigure[CPU time (fixed $s=10$)]{
										\includegraphics[scale=0.32]{./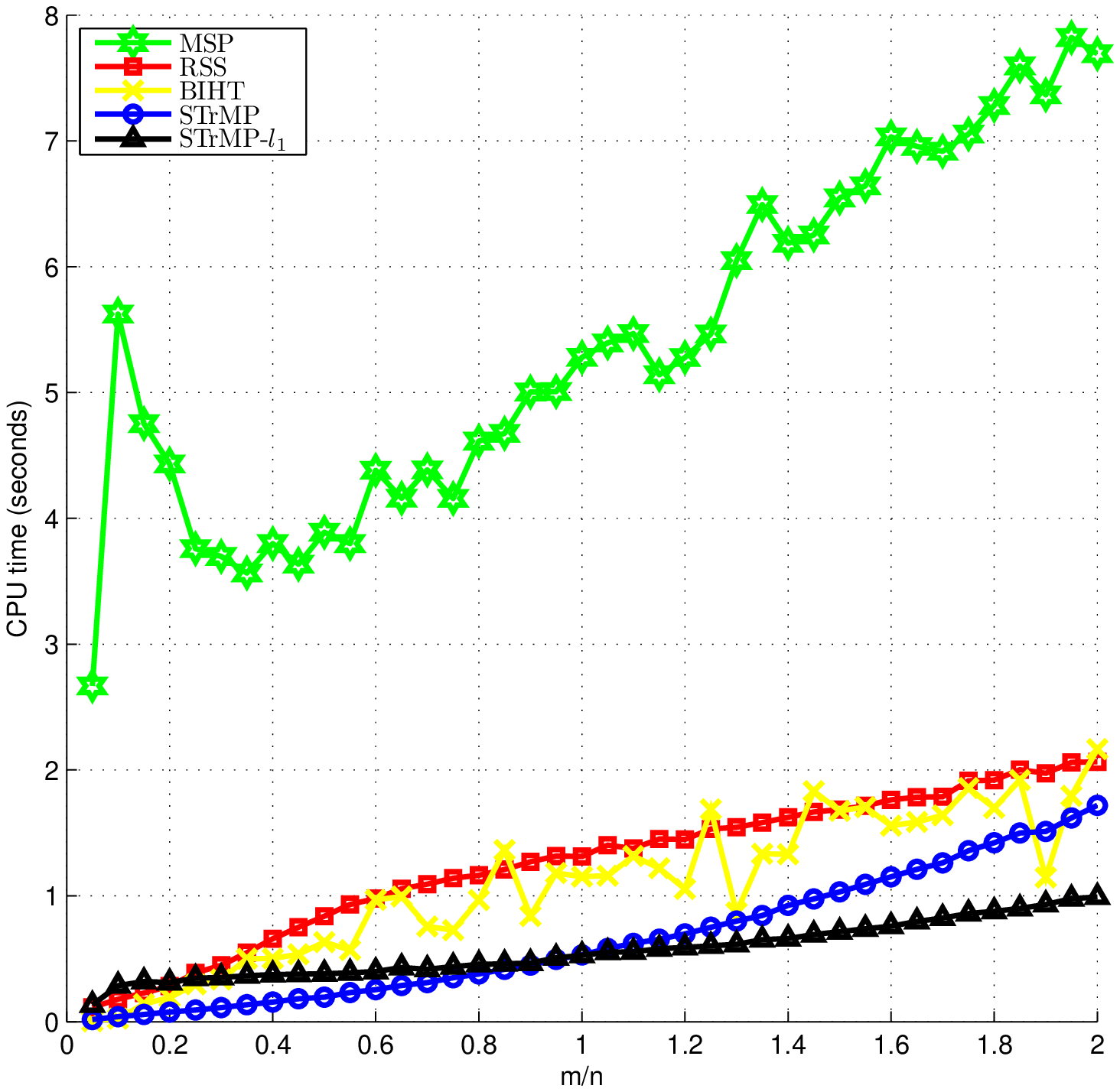}
										\label{fig:speed1}}%
										\subfigure[CPU time (fixed $m=1000$)]{
										\includegraphics[scale=0.32]{./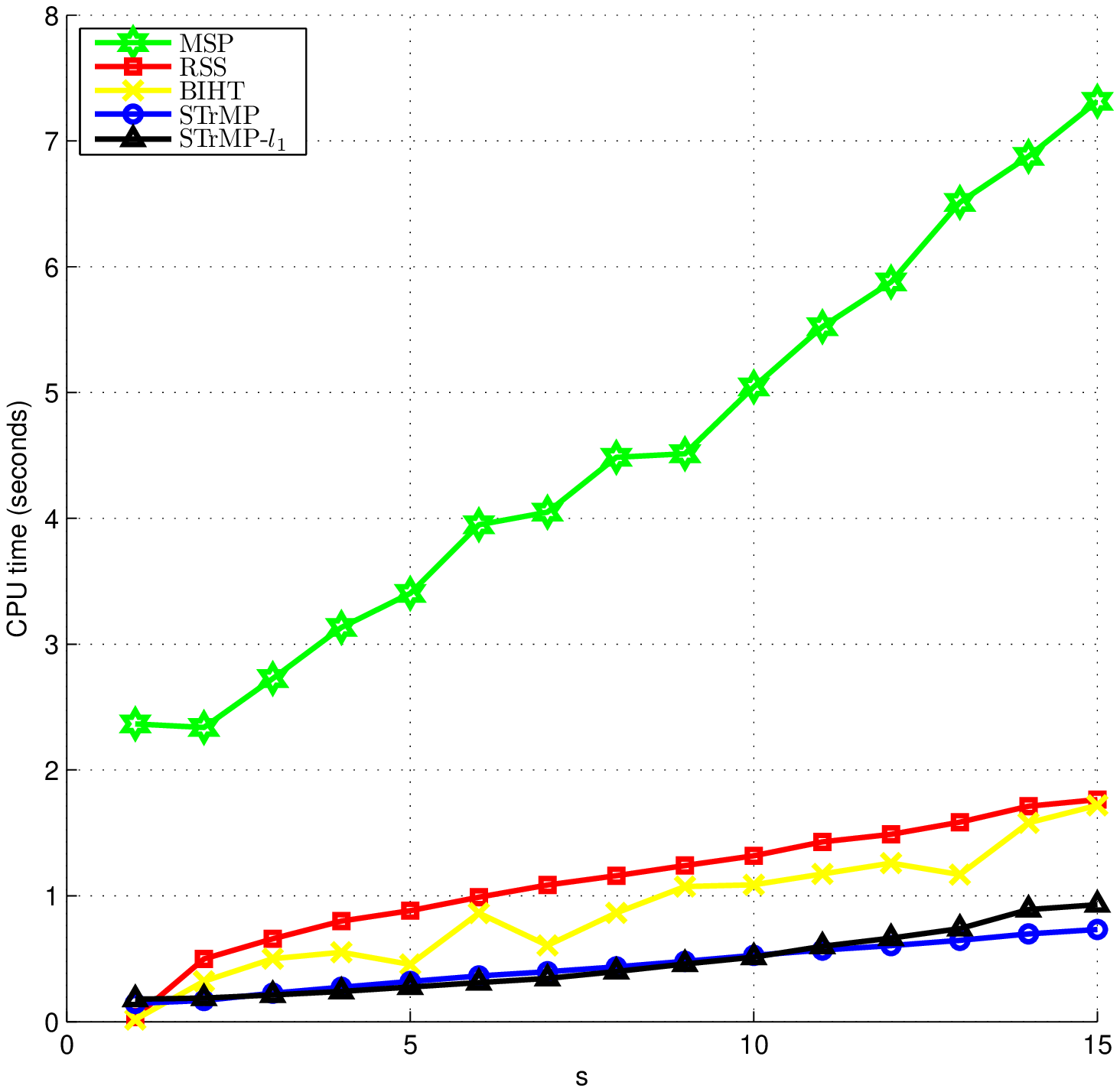}
										\label{fig:speed2}}
										\end{center}
										\caption{Speed experiments: Averaged CPU time for (a) $s=10$ with $m/n$ is between 0.05 and 2, (b) $m=1000$ with $s$ is between 1 and 15. These demonstrate that STrMP and STrMP-$l_1$ take least time comparing with other algorithms.}
										\label{fig:speed}
										\end{figure}
At last, we test the speed of the algorithms by recoding the average computational time. \autoref{fig:speed} depicts that STrMP and STrMP-$l_1$ outperform MSP significantly, and they are also faster than BIHT and RSS. The CPU time of MSP increases quickly as the measurements and the sparsity increase. However, time consuming of STrMP grows very slowly as measurements and sparsity increase, which is particularly striking for STrMP and STrMP-$l_1$.  This is because  the sub-problems of the STrMP algorithms are convex and unconstrained.


\section{Conclusion}


In this paper, we have introduced a fast and accurate greedy algorithm for  the one-bit compressive sensing.
The subproblem in STrMP is convex and unconstraint. And hence, STrMP algorithm is
faster than  previously existing one-bit compressed sensing algorithms. The choose of the first index plays an important role in STrMP.  We  prove that  the first index belongs to the support of the original signal
with high probability provided $m=O(s\log n)$.  The numerical experiments show that the recovery performance of STrMP and STrMP-$l_1$ algorithm are similar with that of BIHT, and they are much better than that of RSS and MSP. One can note that the main part of STrMP is as concise as OMP. So, it will be very interesting to investigate the convergence property of STrMP using the technology developed in the study of OMP \cite{TroppOMP, TroppCOSAMP, ZhSROMPRIP2011}.

 \begin{appendices}

    \renewcommand{\thesection}{{\bf Appendix \Alph{section}}}

\section{Proofs of Theorems 1 and  2}


\begin{proof}[Proof of Theorem \ref{th:j0}]
To this end, we only need to prove that $x^\sharp$ satisfies
 \begin{itemize}
\item[(i)] consistence, i.e. $\sign(Ax^\sharp) =y $,
\item [(ii)]sparsity, i.e. $\|x^\sharp\|_0\leq \|{\hat x}\|_0$,
\item [(iii)] normalization, i.e. $\|Ax^\sharp\|_1=c_0$.
\end{itemize}
We first consider
\begin{eqnarray*}
\sign(Ax^\sharp) &=& \sign(A_{[n]\setminus\{j_0\}}z^\sharp+x_{j_0}^\sharp A_{j_0})\\
 &=& \sign\left(\left(I-\frac{A_{j_0} y^\top}{y^\top A_{j_0}}\right) A_{[n]\setminus \{j_0\}}z^\sharp + \frac{c_0}{y^\top A_{j_0}}A_{j_0}\right)\\
 &=& \sign(Pz^\sharp+q)=y,
\end{eqnarray*}
which implies that (i).
We turn to (iii).
Note that
\begin{eqnarray*}
     \|Ax^\sharp\|_1&=&y^\top Ax^\sharp = y^\top A_{[n]\setminus j_0}z^\sharp + y^\top A_{j_0}x^\sharp_{j_0}\\
      &=& y^\top A_{[n]\setminus j_0} z^\sharp + y^\top A_{j_0} \frac{c_0-y^\top A_{[n]\setminus j_0}z^\sharp}{y^\top A_{j_0}} =  c_0.
\end{eqnarray*}
We arrive at (iii).
We next consider (ii).
To this end, we need to show that
$$
\|x^\sharp\|_0\leq \|\hat{x}\|_0.
$$
Without loss of generality, we can assume that $\|A\hat{x}\|_1=y^\top A\hat{x}=c_0$ (otherwise, we can multiply $\hat x$ by a positive constant  ),
which implies that
\begin{equation}\label{eq:xj0}
\hat{x}_{j_0}= \frac{c_0-y^\top A_{[n]\setminus \{j_0\}}\hat{x}_{[n]\setminus j_0}}{y^\top A_{j_0}}.
\end{equation}
Then a simple calculation shows that
\begin{eqnarray*}
y&=&\sign(A{\hat x})= \sign(A_{[n]\setminus j_0}{\hat x}_{[n]\setminus j_0}+A_{j_0}{\hat x}_{j_0})\\
&=&\sign(P\hat{x}_{[n]\setminus \{j_0\}}+q).
\end{eqnarray*}
Here, in the last equality, we use (\ref{eq:xj0}).
And hence $\hat{x}_{[n]\setminus \{j_0\}}$ satisfies the constraint condition  (\ref{opt:L0L1(2)}) which implies that
$$
\|z^\sharp\|_0 \leq  \|\hat{x}_{[n]\setminus \{j_0\}}\|_0=\|\hat{x}\|_0-1.
$$
So, the definition of $x^\sharp$ implies that
$$
\|x^\sharp\|_0 \leq \|z^\sharp\|_0+1 \leq \|\hat{x}\|_0.
$$
We arrive at the conclusion.
\end{proof}

To prove  Theorem \ref{thm:Firstindex}, we first introduce Hoeffding-type inequality  (see \cite[Propsition~5.10]{VeITNAARM2012}).



\begin{lemma}[(Hoeffding-type inequality)]
Let $\zeta_1,\ldots, \zeta_n$ be independent centered sub-gaussian random variables, and $K = \max\limits_i \|\zeta_i\|_{\psi_2}$. Then for every $a=(a_1,\ldots, a_n) \in \mathbb{R}^n$ and every $\epsilon \geq 0$, we have
              $$P\{|\sum\limits_{i=1}^{n} a_i \zeta_i| \geq \epsilon\} \leq e \cdot \exp(-\frac{c_1\epsilon^2}{K^2 \|a\|_2^2}),$$
where $c_1 > 0$ is an absolute constant.
\end{lemma}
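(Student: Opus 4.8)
The plan is to reduce the weighted sum $S=\sum_{i=1}^n a_i\zeta_i$ to a single sub-gaussian random variable and then apply the exponential Markov (Chernoff) bound. The engine is the standard moment-generating-function form of the sub-gaussian property: a centered random variable $\zeta$ with $\|\zeta\|_{\psi_2}\le K$ satisfies $\mathbb{E}\exp(t\zeta)\le \exp(c\,t^2 K^2)$ for every $t\in\R$, where $c$ is an absolute constant. First I would record this estimate for each $\zeta_i$ (it follows from the definition of the $\psi_2$-norm via the moment bound $(\mathbb{E}|\zeta_i|^p)^{1/p}\le C K\sqrt{p}$ and a term-by-term expansion of the exponential series), noting that the centering hypothesis is exactly what annihilates the linear term and lets the estimate hold for both signs of $t$.

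Next I would exploit independence. Since the $\zeta_i$ are independent, the MGF of $S$ factorizes, so for every $t$,
\begin{equation*}
\mathbb{E}\exp(tS)=\prod_{i=1}^n \mathbb{E}\exp(t a_i\zeta_i)\le \prod_{i=1}^n\exp\!\big(c\,t^2 a_i^2 K^2\big)=\exp\!\big(c\,t^2 K^2\|a\|_2^2\big).
\end{equation*}
In other words, $S$ is itself a centered sub-gaussian variable whose parameter is controlled by $K\|a\|_2$.

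The tail bound then comes from optimizing the Chernoff inequality. For $t>0$, Markov's inequality applied to $\exp(tS)$ gives $P\{S\ge\epsilon\}\le \exp(-t\epsilon)\,\mathbb{E}\exp(tS)\le \exp\!\big(-t\epsilon+c\,t^2 K^2\|a\|_2^2\big)$; minimizing the exponent over $t$ at $t=\epsilon/(2cK^2\|a\|_2^2)$ yields $P\{S\ge\epsilon\}\le\exp\!\big(-\epsilon^2/(4cK^2\|a\|_2^2)\big)$. Applying the identical argument to $-S=\sum_i a_i(-\zeta_i)$, where each $-\zeta_i$ is again centered with the same $\psi_2$-norm, controls the lower tail, and a union bound gives $P\{|S|\ge\epsilon\}\le 2\exp\!\big(-c_1\epsilon^2/(K^2\|a\|_2^2)\big)$ with $c_1=1/(4c)$. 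Finally, since $2<e$, the prefactor $2$ may be replaced by $e$ with the same constant, producing exactly the stated form.

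The routine algebra (the factorization over $i$ and the Chernoff optimization) is painless; the single place that requires care is the first step, namely deriving the MGF estimate $\mathbb{E}\exp(t\zeta)\le\exp(c\,t^2K^2)$ uniformly in $t$ from the definition of $\|\cdot\|_{\psi_2}$, and in particular verifying that the absolute constant $c$ depends on neither $n$, the coefficients $a_i$, nor the individual distributions. This is the crux, since everything downstream is a mechanical consequence of a clean MGF bound. If one instead invokes the equivalence-of-properties lemma as a black box, the prefactor $e$ can alternatively be read off directly from the tail characterization $P\{|\zeta|>t\}\le e\exp(-t^2/K^2)$, in which the $e$ is intrinsic rather than obtained by comparing $2$ with $e$.
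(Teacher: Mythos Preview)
Your argument is correct and is exactly the standard route: pass from the $\psi_2$-norm to an MGF bound for each centered $\zeta_i$, multiply using independence to get a sub-gaussian MGF for $S$ with parameter $K\|a\|_2$, then run Chernoff and union-bound the two tails. The only delicate point, which you correctly flag, is the uniform MGF estimate for a centered sub-gaussian variable; everything else is bookkeeping.

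For comparison, the paper does not prove this lemma at all: it is quoted verbatim as Proposition~5.10 of Vershynin's survey \cite{VeITNAARM2012} and used as a black box in the proof of Theorem~\ref{thm:Firstindex}. Your proof is essentially the one Vershynin gives there, so there is no methodological difference to discuss---you have simply supplied the argument that the paper outsources to the reference. One small remark: the prefactor $e$ in the stated bound is indeed the one coming from Vershynin's tail characterization rather than from rounding $2$ up, but either justification yields the inequality in the stated form with an absolute constant $c_1$.
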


 Here, for a sub-gaussian random variable $\xi$,
 $$
 \|\xi\|_{\psi_2}:=\mathop{\rm sup}\limits_{p\geq 1}({\mathbb E}\abs{\xi}^p)^{1/p}
 $$
  (see also \cite[Definition~5.7]{VeITNAARM2012}). For a standard normal variable $\xi$, $\|\xi\|_{\psi_2}$ is bounded by $\sqrt{2}$. We are now turning to the proof of our theorem.

\begin{proof}[Proof of Theorem \ref{thm:Firstindex}]
Without loss of generality, we can suppose that $\|\hat{x}\|_2 = 1$
and  $\hat{x}_\jmath>0$ where $\jmath := \argmax_{i \in [n]}|\hat{x}_i|$.
We claim that, for any $\epsilon > 0$
\begin{align}\label{eq:claim1}
{\mathbb P}\{ \max\limits_{k \notin \supp(\hat{x})} \abs{\innerp{A_k,y}} < \sqrt{m\epsilon^2 + 2m\log(n-s)} \} \geq 1 - \exp(-\frac{\epsilon^2}{2}),
\end{align}
and
\begin{align}\label{eq:claim2}
  {\mathbb P}\{|\innerp{A_\jmath,y}| \geq 2\cdot m\frac{\hat{x}_\jmath}{\sqrt{2\pi}}- \epsilon\sqrt{m}\}  \geq 1 - e \cdot \exp(-\frac{c_1\epsilon^2}{8}).
\end{align}
Combining (\ref{eq:claim1}) and (\ref{eq:claim2}), to prove  the conclusion we just need to show that
\begin{eqnarray}\label{eq:in>notin}
2\cdot m\frac{\hat{x}_\jmath}{\sqrt{2\pi}} - \epsilon \sqrt{m} \geq \sqrt{m\epsilon^2 + 2m\log(n-s)},
\end{eqnarray}
which is equivalent to
\begin{equation}\label{eq:in>notin1}
\sqrt{m}\cdot \frac{2\hat{x}_\jmath}{\sqrt{2\pi}}  \geq \sqrt{\epsilon^2 + 2\log(n-s)}+ \epsilon .
\end{equation}
Indeed, note that $\hat{x}_\jmath = \|\hat{x}\|_\infty\geq \frac{1}{\sqrt{s}}$ with $\hat{x}$ being $s$-sparse, we have
\begin{eqnarray*}
\sqrt{m}\cdot\frac{2\hat{x}_\jmath}{\sqrt{2\pi}} \geq \sqrt{\frac{2}{{\pi}}} \cdot \sqrt{\frac{m}{s}} \geq  \sqrt{\epsilon^2 + 2\log(n-s)}+ \epsilon,
\end{eqnarray*}
which implies  (\ref{eq:in>notin1}) and hence (\ref{eq:in>notin}) holds.
Here, in the last inequality, we use
\begin{eqnarray}\label{eq:condofm}
m \geq \frac{\pi}{2} s ( \epsilon + \sqrt{\epsilon^2 + 2\log(n-s)} )^2.
\end{eqnarray}
Hence, if the condition (\ref{eq:condofm}) is satisfied, we have
                                          $$ \argmax_{i \in [n]}{|(A^\top \sign(A\hat{x}))_i|} \in \supp(\hat{x})$$
holds, with probability at least $1-2e \cdot \exp(-c\epsilon^2)$, which implies the conclusion,
where $c:=\min\{1,c_1/8\}$.

 To this end, we still need to prove (\ref{eq:claim1}) and (\ref{eq:claim2}).
 We first consider (\ref{eq:claim1}). According to $y = \sign(A\hat{x})$, we obtain that the entries of $y$ are $i.i.d.$ Bernoulli random variables, and they are independent of entries of $A_k$ for those $k \notin \supp(\hat{x})$.
 Note that $\innerp{A_k,y} \sim \mathcal{N}(0,m)$ provided $k \notin \supp(\hat{x})$. The Gaussian concentration inequality implies
                       $$P\{|\innerp{A_k,y}| <  t\} \geq 1 - \exp\left(-\frac{t^2}{2m}\right),$$
 where $t>0$.
Since $\hat{x}$ is $s$-sparse, there are $(n-s)$ entries do not belong to $\supp(\hat{x})$. By using  the union bound, we have
           $$
           {\mathbb P}\{ \max\limits_{k \notin \supp(\hat{x})} |\innerp{A_k,y}| < t\} \geq 1 - (n-s)\exp\left(-\frac{t^2}{2m}\right).
           $$
Taking $t=\sqrt{m\epsilon^2 + 2m\log(n-s)}$, we obtain that
   $${\mathbb P}\{ \max\limits_{k \notin \supp(\hat{x})} |\innerp{A_k,y}| < \sqrt{m\epsilon^2 + 2m\log(n-s)} \} \geq 1 - \exp\left(-\frac{\epsilon^2}{2}\right).$$

We next turn to (\ref{eq:claim2}). Without loss of generality we can assume that $\hat{x}_\jmath > 0$.
Note that $A_{i\jmath}y_i,~i \in [m]$ are $i.i.d.$ random variables since $y_i$ only depends on the $i$th row of $A$.  Then for every $\epsilon \geq 0$
$${\mathbb P}\{|A_{i\jmath}y_i| \geq \epsilon\} ={\mathbb P}\{|A_{i\jmath}| \geq \epsilon\} \leq \exp(-\frac{\epsilon^2}{2}),$$
which implies that  $A_{i\jmath}y_i,~i \in [m]$ are $i.i.d.$ sub-gaussian random variables.
 We next  calculate  the expectation of $A_{i\jmath}y_i$.
Let $\xi := A_{i\jmath}$ and $\eta := \sum\limits_{k \neq \jmath} A_{ik} \hat{x}_k$. Then $\xi \sim \mathcal{N}(0,1)$ and $\eta \sim \mathcal{N}(0,1-\hat{x}_\jmath^2)$. Also, note that $\xi$ and $\eta$ are independent
and
$${\mathbb E}(A_{i\jmath}y_i) ={\mathbb E}( A_{i\jmath}\cdot \sign( A_{i\jmath}\hat{x}_\jmath + \sum\limits_{k \neq \jmath} A_{ik} \hat{x}_k ) )={\mathbb E} (\xi\cdot \sign(\xi \hat{x}_\jmath+ \eta)).$$
 Now we have
\allowdisplaybreaks
\begin{eqnarray}
{\mathbb E}(A_{i\jmath}y_i)
& = & \int_{-\infty}^{\infty} \int_{-\infty}^{\infty} \xi\cdot \sign(\xi \hat{x}_\jmath + \eta) \frac{1}{\sqrt{2\pi}} e^{-\frac{\xi^2}{2}} \frac{1}{\sqrt{2\pi(1-\hat{x}_\jmath^2)}} e^{-\frac{\eta^2}{2(1-\hat{x}_\jmath^2)}} \ud \xi \ud \eta \nonumber \\
& = & \int_{-\infty}^{\infty} \int_{-\infty}^{-\frac{\eta}{\hat{x}_\jmath}} (-\xi) \frac{1}{\sqrt{2\pi}} e^{-\frac{\xi^2}{2}} \frac{1}{\sqrt{2\pi(1-\hat{x}_\jmath^2)}} e^{-\frac{\eta^2}{2(1-\hat{x}_\jmath^2)}} \ud \xi \ud \eta \nonumber \\
&   & +\: \int_{-\infty}^{\infty} \int_{-\frac{\eta}{\hat{x}_\jmath}}^{\infty} \xi \frac{1}{\sqrt{2\pi}} e^{-\frac{\xi^2}{2}} \frac{1}{\sqrt{2\pi(1-\hat{x}_\jmath^2)}} e^{-\frac{\eta^2}{2(1-\hat{x}_\jmath^2)}} \ud \xi \ud \eta \\
&=& 2\frac{\hat{x}_\jmath}{\sqrt{2\pi}}\label{eq:twointsum}.
\end{eqnarray}
By using Lemma 1, we can get
    $${\mathbb P}\{|\sum\limits_{i=1}^{m} (A_{i\jmath}y_i - {\mathbb E}(A_{i\jmath}y_i))| \geq \epsilon  \sqrt{m}\}  \leq e \cdot \exp(-{c_1\epsilon^2/8}),$$
where $c_1$ is an absolute constant. Then
  $${\mathbb P}\{|\sum\limits_{i=1}^{m} A_{i\jmath}y_i| \geq m\cdot {\mathbb E}(A_{i\jmath}y_i) - \epsilon\sqrt{m}\}  \geq 1 - e \cdot \exp(-c_1\epsilon^2/8).$$
\end{proof}

\begin{remark}
We also make numerical experiments to test the success probability of the choose of the first index.
 We set $n=1000, s=15$ and change $m$ within the range $[30,200]$.
  We repeat the experiment $100$ times for each $m$ and calculate the success rate.
 The numerical results show that  $m \thickapprox 150$ measurements are enough for correctly selecting the first index with the success rate being 1.
\end{remark}

\section{Two-Point Step Size Gradient Method}
The Step 6 of STrMP algorithms is to use the two-point step size gradient method \cite{BaBoTPSSGM1988}. We show it in  Algorithm \ref{alg:bbalg}. The algorithm is usually to solve unconstrained optimization problems in the form of
\begin{equation}
\min_{x \in {\R}^n} f(x),
\end{equation}
where $f(x)$ is derivable. For more details see \cite{BaBoTPSSGM1988}.
\begin{algorithm}[htb]
\caption{Two-Point Step Size Gradient Method}
\label{alg:bbalg}
\begin{algorithmic}[1]
\REQUIRE $ 0 \leq \varepsilon \ll 1 $ \par
\textbf{Initialization:}~$x_1 \in {\R}^n$,~$k:=1$
\WHILE {$\|\nabla f(x_k)\|_2 > \varepsilon$}
	\STATE $d_k = -\nabla f(x_k)$
	\IF {$k=1$}
	\STATE using exact line search or other methods to find first step $\alpha_1$.
	\ELSE
	\STATE $\alpha_k = s_{k-1}^\top y_{k-1} / \|y_{k-1}\|_2^2$, where $$s_{k-1} = x_k - x_{k-1},~y_{k-1} = \nabla f(x_k)-\nabla f(x_{k-1}).$$
	\ENDIF
	\STATE $x_{k+1} := x_k + \alpha_k d_k$, $k:=k+1$
\ENDWHILE
\ENSURE $x_k$
\end{algorithmic}
\end{algorithm}

\end{appendices}

\bibliographystyle{plain}

\begin{thebibliography}{99}

\bibitem{BaBoTPSSGM1988}
Jonathan Barzilai and Jonathan~M. Borwein.
\newblock Two-point step size gradient methods.
\newblock {\em IMA Journal of Numerical Analysis}, 8(1):141--148, 1988.

\bibitem{BlDaTHTCS2009}
Thomas Blumensath and Mike~E. Davies.
\newblock Iterative hard thresholding for compressed sensing.
\newblock {\em Applied and Computational Harmonic Analysis}, 27(3):265--274, 2009.

\bibitem{BoGSSRSM2009}
Petros~T. Boufounos.
\newblock Greedy sparse signal reconstruction from sign measurements.
\newblock In {\em Asilomar Conference on Signals, Systems and Computers}, pages 1305--1309, 2009.

\bibitem{BoBa1BCS2008}
Petros~T. Boufounos and Richard~G. Baraniuk.
\newblock 1-bit compressive sensing.
\newblock In {\em Proceedings of the 42nd Annual Conference on Information Sciences and Systems (CISS)}, pages 16--21, 2008.

\bibitem{CaRoTaSSRIIM2006}
Emmanuel~J. Cand\`es, Justin~K. Romberg, and Terence Tao.
\newblock Stable signal recovery from incomplete and inaccurate measurements.
\newblock {\em Communications on Pure and Applied Mathematics}, 59(8):1207--1223, 2006.

\bibitem{CaTaDLP2005}
Emmanuel~J. Cand\`es and Terence Tao.
\newblock Decoding by linear programming.
\newblock {\em IEEE Transactions on Information Theory}, 51(12):4203--4215, 2005.

\bibitem{subspace}
Wei Dai, Olgica Milenkovic.
\newblock Subspace pursuit for compressive sensing signal reconstruction,
\newblock {\em IEEE Transactions on Information Theory}, 55(5): 2230-2249, 2009.

\bibitem{DoCS2006}
David~L. Donoho.
\newblock Compressed sensing.
\newblock {\em IEEE Transactions on Information Theory}, 52(4):1289--1306, 2006.

\bibitem{JaLaBoBaRBCSVBSESV2013}
Laurent Jacques, Jason~N. Laska, Petros~T. Boufounos, and Richard~G. Baraniuk.
\newblock Robust 1-bit compressive sensing via binary stable embeddings of sparse vectors.
\newblock {\em IEEE Transactions on Information Theory}, 59(4):2082--2102, 2013.

\bibitem{LaWeYiBaTBVFASRBCM2011}
Jason~N. Laska, Zaiwen Wen, Wotao Yin, and Richard~G. Baraniuk.
\newblock Trust, but verify: Fast and accurate signal recovery from 1-bit compressive measurements.
\newblock {\em IEEE Transactions on Signal Processing}, 59(11):5289--5301, 2011.

\bibitem{TroppCOSAMP}
Deanna Needell, Joel~A. Tropp.
\newblock CoSaMP: Iterative signal recovery from incomplete and inaccurate samples.
\newblock {\em Applied and Computational Harmonic Analysis}, 26(3): 301-321, 2009.

\bibitem{ROMP}
Deanna Needell, Roman Vershynin.
\newblock Uniform uncertainty principle and signal recovery via regularized orthogonal matching pursuit.
\newblock {\em Foundations of Computational Mathematics}, 9(3): 317-334, 2009.

\bibitem{PlVeOBCSLP2013}
Yaniv Plan and Roman Vershynin.
\newblock One-bit compressed sensing by linear programming.
\newblock {\em Communications on Pure and Applied Mathematics}, 66(8):1275--1297, 2013.

\bibitem{TroppOMP}
Joel~A. Tropp.
\newblock Greed is good: Algorithmic results for sparse approximation.
\newblock {\em IEEE Transactions on Information Theory}, 50(10):2231-2242, 2004.

\bibitem{VeITNAARM2012}
Roman Vershynin.
\newblock Introduction to the non-asymptotic analysis of random matrices.
\newblock In Yonina~C. Eldar and Gitta Kutyniok, editors, {\em Compressed Sensing: Theory and Applications}, pages 210--268. Cambridge University Press, 2012.

\bibitem{Wangxu}
Yang Wang and Zhiqiang Xu.
\newblock The performance of PCM quantization under tight frames representations.
\newblock {\em SIAM Journal on Mathematical Analysis}, 44(4):2802-2823, 2012.

\bibitem{XuOmmp}
Zhiqiang Xu.
\newblock The performance of orthogonal multi-matching pursuit under RIP.
\newblock Available at: \url{http://arxiv.org/abs/1304.1969}.

\bibitem{YaYaOsRBCSUAOP2012}
Ming Yan, Yi Yang, and Stanely Osher.
\newblock Robust 1-bit compressive sensing using adaptive outlier pursuit.
\newblock {\em IEEE Transactions on Signal Processing}, 60(7):3868--3875, 2012.

\bibitem{ZhSROMPRIP2011}
Tong Zhang.
\newblock Sparse recovery with Orthogonal Mathching Pursuit under {RIP}.
\newblock {\em IEEE Transactions on Information Theory}, 57(9):6215--6221, 2011.

\end{thebibliography}

\end{document}